\def\eqref#1{equation~\ref{#1}}
\def\1{\bm{1}}
\def\rvX{{\mathbf{X}}}
\DeclareMathAlphabet{\mathsfit}{\encodingdefault}{\sfdefault}{m}{sl}
\SetMathAlphabet{\mathsfit}{bold}{\encodingdefault}{\sfdefault}{bx}{n}
\newtheorem{theorem}{Theorem}
\newtheorem{lemma}{Lemma}
\newcommand{\sch}[0]{Schrödinger\xspace}
\title{\(\tilde {\mathcal  O}(N^2)\) Representation of General Continuous Anti-symmetric Function}
\author{%
  Haotian Ye$^{1,}$\thanks{Equal contributions.} \quad Ruichen Li $^{2,*}$ \quad Yuntian Gu$^{3,*}$  \quad Yiping Lu$^{1}$\quad Di He$^{2,}$\thanks{Corresponding to: \texttt{\{dihe,wanglw\}@pku.edu.cn}.} \quad Liwei Wang$^{2,4,\dag}$\\
  $^1$Stanford University\\
  $^2$National Key Laboratory of General Artificial Intelligence, Peking University\\
  $^3$Yuanpei College, Peking University\quad\\
  $^4$Center for Machine Learning Research, Peking University\\
}
\begin{document}

\maketitle

\begin{abstract}
In quantum mechanics, the wave function of fermion systems such as many-body electron systems are anti-symmetric (AS) and continuous, and it is crucial yet challenging to find an ansatz to represent them. 
This paper addresses this challenge by presenting an \(\tilde{\mathcal{O}}(N^2)\) ansatz based on permutation-equivariant functions. We prove that our ansatz can represent any AS continuous functions, and can accommodate the determinant-based structure proposed by \citet{hutter2020representing}, solving the proposed \textit{open problems} that \(\mathcal{O}(N)\) Slater determinants are sufficient to provide universal representation of  AS continuous functions. 
Together, we offer a generalizable and efficient approach to representing AS continuous functions, shedding light on designing neural networks to learn wave functions.
\end{abstract}

\section{Introduction}

Quantum mechanics, central in describing microscopic phenomena such as superconductivity and quantum Hall effect, is essential and has been investigated for nearly a century \citep{feynman1965feynman}. Along its development, one of the important yet highly challenging problems is to compute the \textit{wave function} by solving the time-independent \sch equation of a given system. Since wave functions completely determine the initial state of quantum systems and can be used to derive subsequent states at any time, it is important in understanding microscopic systems and significant in various implications in material science and drug discovery \citep{heifetz2020quantum, keimer2017physics}. 


Solving the \sch equation is highly nontrivial, since the dimension of the wave function scales linearly with the number of particles, leading to the typical curse of dimensionality (CoD) \citep{han2018solving}. Additionally, wave functions are hard to express as they do not have analytical expression, even in a simple atomic system with two electrons \citep{aznabaev2018nonrelativistic}. To overcome these challenges, Quantum Monte Carlo (QMC) \citep{needs2020variational, foulkes2001quantum} employs random sampling to approximate integrals in high dimensional spaces, avoiding solving the \sch equation directly and releasing the CoD problem. Indeed, the computation complexity of QMC is linear to the number of samples, allowing more efficient simulation of quantum systems than conventional methods \citep{kummel2003biography}.

Despite the success on several quantum domains, it remains challenging to apply QMC to fermion systems, due to the well-known \textit{sign problem} of fermion systems \citep{pan2022sign}.
Specifically, physics imposes that the wave function of a fermion system is anti-symmetric (AS), meaning that the function value will be flipped when the positions of two electrons inside the system are swapped. This property, when combined with the continuity of the wave function, results in a complicated zero set (set of input with zero function value), in which case the computation cost of conventional QMC can grow exponentially \citep{PhysRevB.92.045110}. 
To address the problem, one has to confine Monte Carlo walkers from crossing the zero set \citep{anderson1975random}. The performance of this method largely relies on the prediction of the zero set, which is known to be difficult. 

Recently, neural networks (NNs) have demonstrated remarkable expressivity in approximating functions in image understanding and language modeling \citep{krizhevsky2012imagenet,vaswani2017attention}. The potential of revolutionizing quantum physics by representing wave functions as NNs has also been demonstrated \citep{carleo2017solving, choo2020fermionic, hibat2020recurrent, pfau2020ab, hermann2020deep, pescia2022neural, gnech2022nuclei, von2022self, li2023forward}. In particular, \citet{pfau2020ab} developed FermiNet that models wave function via equivariant NNs and the Slater determinants, showing promising results in molecules with small sizes of electron $N$. 


However, the $\mathcal O(N^3)$ computational complexity of Ferminet hinders its application to large-scale systems. 
This is exacerbated by the requirement to compute the second-order derivative in the \sch equation and the gradient of neural network parameters.  Indeed, \cite{huang2023geometry,zweig2022towards} shows that an exponential number ($\Omega(\exp(N))$) of Slater determinants
is needed to achieve universal approximation.
In pursuit of more effective and efficient architectures, \citet{hutter2020representing} delves into the fundamental problems related to the representation of antisymmetry continuous (ASC) functions. They elucidate that while ASC functions in one-dimensional systems can be represented by a single determinant, the adequacy and requisite number of determinants for ASC functions in the three-dimensional space remain an open problem. 


In this work, we propose a novel ansatz that represents \textit{any} ASC function in any dimensional space, and its computation complexity is $\tilde{\mathcal{O}}(N^2)$, where $\tilde{\mathcal O}$ hides logarithmic terms.
Our method is based on permutation-equivariant functions, and can be easily represented by modern neural networks.
We further demonstrate that our method can be converted into the form of determinants, thus addressing the open problem proposed by \citet{hutter2020representing} that $\mathcal{O}(N)$ generalized Slater determinants are sufficient to represent any ASC functions in high dimensional spaces.
From a theoretical side, the key lemma we prove can further help improve the efficiency of \citet{chen2023exact} from \(\mathcal{O}(N^5)\) to $\mathcal O(N^4)$, and our ansatz generalizes beyond \citet{pang2022n} that represents only ground state wave functions. 
We also shed light on how to leverage our architecture in practice to achieve significant acceleration.

\section{Preliminary}

Define \(X_i \in \mathbb{R}^{d}\) as the spatial coordinates of a electron in the standard \(d\)-dimensional Euler space, and \(\mathbf X = (X_1^\top, \ldots, X_N^\top) \in \mathbb{R}^{N \times d}\) as the configuration of an \(N\)-electron system. In quantum mechanics, the wave function \(\Psi: \mathbb R^{N\times d} \mapsto \mathbb C\) encodes all information of a system, and $|\Psi(X_1, \ldots, X_N)|^2$ represents the probability that $N$ electrons are in the position of $(X_1, \ldots, X_N)$.

\paragraph{Anti-Symmetry.} Following Fermi-Dirac statistics, the wave function of any fermion system should be anti-symmetric. Specifically, we have \(\Psi(X_1, \ldots, X_N) = \sigma(\pi) \Psi(X_{\pi(1)}, \ldots, X_{\pi(N)})\) for any permutation \(\pi\), where \(\sigma(\pi) = \pm 1\) denotes the sign of the permutation $\pi$. In addition, the wave function should also be continuous. For abbreviation, we use ASC to represent ``anti-symmetric and continuous''. Notice that for any AS function $\Psi$, we have $\Psi(\mathbf X) = 0$ for any $\mathbf X$ in $\Omega \triangleq \{\mathbf{X} \in \mathbb{R}^{N \times d} : \exists i \neq j, X_i = X_j\}.$ We call $\Omega$ the intrinsic zero set, which is a subset of the zero set $\{\mathbf X: \Psi(\mathbf X) = 0\}$.

\paragraph{(Generalized) Slater determinant.} Conventional way to represent the AS function is to use the Slater determinant \citep{foulkes2001quantum}. Specifically, For $N$ chosen single-particle wave functions $\{\psi_1, \psi_2, \ldots, \psi_N\}$, $\Psi(\mathbf X) = \det (\mathbf \Psi)$, where $\mathbf \Psi$ is a square matrix, and $\mathbf \Psi_{ij} = \psi_i(X_j)$.
The anti-symmetry is ensured by the property of determinants, as $\det(\mathbf \Psi)$ goes to its negative when two rows are swapped. To leverage the expressivity of neural networks, \citet{pfau2020ab} proposes a generalization of the Slater determinant. In their approach, the single-particle wave function $\psi_i(X_j)$ is replaced with $\psi_i(X_j|X_{\neq j})$, i.e.,
\begin{equation}\label{eq:generalized_slater}
\Psi(X) = 
\begin{vmatrix}
\psi_1(X_1|X_{\neq 1}) & \cdots & \psi_N(X_1|X_{\neq 1}) \\
\psi_1(X_2|X_{\neq 2}) & \cdots & \psi_N(X_2|X_{\neq 2}) \\
\vdots         & \ddots & \vdots      \\
\psi_1(X_N|X_{\neq N}) & \cdots & \psi_N(X_N|X_{\neq N}) 
\end{vmatrix}.
\end{equation}
Here, $X_{\neq j}$ denotes the configuration of all electrons except electron $j$. Importantly, $\psi_j(X_i|X_{\neq i})$ is designed to be symmetric to the coordinates in $X_{\neq j}$. This modification preserves the AS property and provides a more expressive representation of electron interactions.

\paragraph{Representing ASC Function}
To explore more effective and efficient representation, \citet{hutter2020representing} studies the expressivity of the generalized Slater determinants, and shows that for $d=1$, \cref{eq:generalized_slater} can represent any ASC functions when $\psi_i(X_j|X_{\neq j})$ is continuous and symmetric to $X_{\neq j}$.
Specifically, let $\pi \in S_N$ be the permutation that sorts scalar $X_i$ in the real axis. The single-particle wave function $\psi_i(X_j | X_{\neq j})$ is defined as:
\begin{equation*}
\psi_i(X_j | X_{\neq j}) := 
\begin{cases} 
|\Psi(X_{\pi(1)}, \ldots, X_{\pi(N)})|^{1/N} & \text{if } j = \pi(i), \\
0 & \text{else}.
\end{cases}
\end{equation*}
This construction is valid when $\Psi(X_{\pi(1)}, \ldots, X_{\pi(N)}) \geq 0$. For the general case, one can modify this approach by replacing $|\Psi|^{1/N}$ in $\psi_1$ with $\text{sign}(\Psi)|\Psi|^{1/N}$.

The continuity of the construction is derived in \citet{hutter2020representing}. Intuitively, the permutation \(\pi\) changes when the order of two inputs is swapped. In one-dimensional space, this means \(X_i = X_j\) for some \(i \neq j\) and thus $\Psi(\mathbf X) = 0$. Therefore, there is no discontinuity between the first case and the second case in the construction.
Conversely, when \(d > 1\), \(\Psi\) may be non-zero when the order of inputs changes, in which case $\psi_i$ is discontinuous and can be hard to represent by neural networks. 

Recently, \citet{pang2022n} designed an ansatz that represents the ASC wave function for all \textit{grouth state} systems. Their construction leverages the tiling property of the ground state and cannot be generalized to general ASC functions. In this paper, we answer open question from \citet{hutter2020representing}:
\\\textit{How to represent general ASC function via the generalized Slater determinants for \(N\geq 2, d\geq 2\)?
}
\section{\(\Tilde{\mathcal{O}}(N^2)\) Continuous Anti-symmetric Representation \label{ansatz}}


\paragraph{AS basis.}
As discussed above, the difficulty in representing ASC function in high dimension spaces is that for any order defined in $\mathbb R^d$, the indices of $\mathbf X$ can change even when $\Psi(\mathbf X)$ is not zero. We get rid of representation via total order and choose to decompose a ASC function into multiple pieces and represent each piece by one basis. Specifically, for $y \in\mathbb R^d$, we define
\begin{equation*}
\Tilde{f}_y(\mathbf X) = \prod_{i<j} y^T (X_i - X_j).
\end{equation*}
Obviously, for any $y$ we have $\tilde f_{y}$ is ASC. The lemma below is essential to our result.
\begin{lemma}{(Support Set)}
\label{support}
For any \(K \geq dN + 1\), there exist vectors \(y_1, \ldots, y_K \in \mathbb{R}^d\) such that the intersection of the zero set of \(\Tilde{f}_{y_k}(X)\) equals $\Omega$, i.e.
\begin{equation}\label{eq:zero}
    \bigcap_{k=1}^K \{\mathbf X:  \tilde f_{y_k}(\mathbf X) = 0\} = \Omega.
\end{equation}
Furthermore, if $y_k$ is uniformly selected from $\{y \in \mathbb R^d:\|y\|=1\}$, then \cref{eq:zero} holds almost surely, i.e. the probability it holds is $1$.
\end{lemma}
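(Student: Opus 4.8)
The plan is to establish the ``almost surely'' statement directly, since it subsumes the existence claim. First, the inclusion $\Omega \subseteq \bigcap_{k=1}^K\{\mathbf X : \tilde f_{y_k}(\mathbf X) = 0\}$ holds for \emph{every} choice of directions and is immediate: if $X_i = X_j$ for some $i\neq j$ then the factor $y^\top(X_i - X_j)$ of $\tilde f_y(\mathbf X)$ vanishes for all $y$. So I only need the reverse inclusion for almost every $(y_1,\dots,y_K) \in (S^{d-1})^K$, i.e., that for every $\mathbf X \notin \Omega$ some $\tilde f_{y_k}(\mathbf X)$ is nonzero. I would first record the useful reformulation $\tilde f_y(\mathbf X)\neq 0 \iff$ the $N$ scalars $y^\top X_1,\dots,y^\top X_N$ are pairwise distinct $\iff y\notin\bigcup_{i<j}(X_i-X_j)^\perp$; thus the goal is to find finitely many directions so that for every configuration of $N$ distinct points at least one of them avoids all $\binom N2$ hyperplanes $(X_i-X_j)^\perp$.

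The heart of the proof is a dimension count on the semialgebraic incidence set
\begin{equation*}
I = \bigl\{(\mathbf X, y_1,\dots,y_K) \in (\R^{N\times d}\setminus\Omega)\times(S^{d-1})^K : \tilde f_{y_k}(\mathbf X)=0 \text{ for all } k\bigr\}.
\end{equation*}
Let $\mathrm{Bad}\subseteq(S^{d-1})^K$ be the projection of $I$ that forgets $\mathbf X$ --- semialgebraic by Tarski--Seidenberg --- so that a tuple $(y_1,\dots,y_K)\notin\mathrm{Bad}$ is exactly one for which \cref{eq:zero} holds. I would bound $\dim \mathrm{Bad}$ by projecting $I$ the other way, onto the $\mathbf X$-coordinate: the image sits inside $\R^{N\times d}\setminus\Omega$, which has dimension $Nd$, and the fiber over a fixed $\mathbf X\notin\Omega$ is $\bigl(\bigcup_{i<j}(X_i-X_j)^\perp\cap S^{d-1}\bigr)^K$. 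Since $X_i\neq X_j$, each $(X_i-X_j)^\perp\cap S^{d-1}$ is a $(d-2)$-sphere, so this (finite) union has dimension $d-2$ and the fiber has dimension $K(d-2)$, uniformly in $\mathbf X$. Hence $\dim I\le Nd + K(d-2)$ and therefore $\dim\mathrm{Bad}\le Nd+K(d-2)$.

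To finish, I would compare with $\dim(S^{d-1})^K = K(d-1)$: the inequality $Nd + K(d-2) < K(d-1)$ is equivalent to $Nd < K$, i.e., to the hypothesis $K\ge dN+1$. Under that hypothesis $\mathrm{Bad}$ is a semialgebraic subset of $(S^{d-1})^K$ of strictly smaller dimension, hence has measure zero for the uniform product measure; its complement is therefore nonempty (giving the existence of $y_1,\dots,y_K$) and a tuple of i.i.d.\ uniform directions lies outside $\mathrm{Bad}$ with probability one (giving the ``furthermore'' part). The case $d=1$ I would dispatch separately: there $(X_i-X_j)^\perp=\{0\}$ misses $S^0=\{\pm1\}$, so $I=\emptyset$ and any single nonzero $y$ already works.

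The step I expect to be the main obstacle is the dimension bookkeeping that underlies $\dim I\le Nd+K(d-2)$: one must invoke the fiber-dimension theorem for semialgebraic maps carefully, verify that the fiber dimension is genuinely constant over all of $\R^{N\times d}\setminus\Omega$, handle $d=2$ where the ``$(d-2)$-sphere'' degenerates to an antipodal pair of dimension $0$, and convert ``dimension strictly less than the ambient'' into ``surface-measure zero'' rigorously. The rest --- the two inclusions and the arithmetic $Nd+K(d-2)<K(d-1)\Leftrightarrow K>Nd$ --- is routine.
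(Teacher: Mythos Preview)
Your argument is correct and arrives at the same dimension count as the paper --- the bad tuples have codimension at least $K-Nd$ in the space of direction tuples --- but the route is genuinely different. The paper works in $\mathbb{R}^{dK}$ rather than $(S^{d-1})^K$ and, instead of invoking Tarski--Seidenberg plus the semialgebraic fiber-dimension theorem, explicitly parametrizes the bad set $\Theta$ as a finite union of images of polynomial maps $\mathcal{A}_t^{(l_1,\dots,l_K)}:\mathbb{R}^{dN}\times\mathbb{R}^{(d-1)K}\to\mathbb{R}^{dK}$: one map for each choice of which pair $(i_k,j_k)$ satisfies $y_k^\top(X_{i_k}-X_{j_k})=0$ and which coordinate of $X_{i_k}-X_{j_k}$ is nonzero. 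Since $dN+(d-1)K<dK$ exactly when $K\ge dN+1$, the trivial case of Sard's theorem (a smooth image of a strictly lower-dimensional domain is Lebesgue-null) gives $m(\Theta)=0$.

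Your version is cleaner on two points: you work directly on $(S^{d-1})^K$, so you never need the implicit cone-invariance step that converts ``$\Theta$ is Lebesgue-null in $\mathbb{R}^{dK}$'' into ``$\Theta\cap(S^{d-1})^K$ is null for the surface measure'' (the paper glosses over this), and you avoid the somewhat ad hoc construction of the $\mathcal{F}_l$ and $\mathcal{A}_t^{(l_1,\dots,l_K)}$ maps. The paper's approach, on the other hand, is more self-contained in that it does not appeal to real-algebraic machinery beyond the statement that polynomial images of $\mathbb{R}^m$ in $\mathbb{R}^n$ with $m<n$ are null. The concerns you flag (fiber dimension, the $d=2$ degeneration, dimension-to-measure) are all standard for semialgebraic sets and pose no real obstacle.
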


\Cref{support} ensures that when $K \geq dN+1$, for any ASC function $\Psi$ and any $\mathbf X$ such that $\Psi(\mathbf X) \neq 0$, there exists at least one index $k$ such that $f_{y_k}(\mathbf X) \neq 0$. This property is highly desired as we can symmetric functions on $f_k$ to recover the value of $\Psi(\mathbf X)$.
As we will show, $K$ corresponds to the number of determinants required to represent any ASC function.

The proof is closely related to differentiable manifold and probability theory, and we defer it to \cref{appendix}. 
Naively, for $K \geq \mathcal O(N^2d)$, one can easily prove the lemma according to the principle of pigeonholes. \Cref{support} demonstrates that $\mathcal O(Nd)$ functions are sufficient.
We believe this lemma is of independent technical interest.

\paragraph{Continuous anti-symmetric ansatz.}
We propose the following wave function Ansatz:
\begin{equation}\label{eq:ansatz}
\phi(\mathbf X) = \sum_{k=1}^{K} f_k(\mathbf X)g_k(\mathbf X),
\end{equation}
where \(\{f_k(\mathbf X)\}_{k=1}^K\) are pre-defined AS functions, and \(\{g_k(\mathbf X)\}_{k=1}^K\) are general continuous symmetric functions to be simulated by neural networks. Compared to the Jastrow ansatz presented in \cite{zweig2022towards}, \cref{eq:ansatz} allows for a more powerful representation of the AS function, as shown below:



\begin{theorem}{(Continuous Universality)}
Let \(K = dN + 1\). There exists $K$ pre-defined functions $f_1,\ldots, f_K$, such that for any anti-symmetric continuous function \(\Psi\),  we can find $K$ continuous symmetric functions \(g_1, \ldots, g_K \in \mathcal{C}(\mathbb{R}^{N \times d})\) such that $\phi$ in \cref{eq:ansatz} equals \(\Psi\).
\end{theorem}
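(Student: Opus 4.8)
The plan is to use \Cref{support} to pick the $K = dN+1$ predefined anti-symmetric functions $f_k := \tilde f_{y_k}$, where the $y_k$ are chosen (or sampled on the unit sphere) so that $\bigcap_k \{\mathbf X : f_k(\mathbf X) = 0\} = \Omega$. Given an arbitrary ASC function $\Psi$, I would then try to reconstruct $\Psi$ pointwise by solving, at each $\mathbf X$, the scalar equation $\sum_{k=1}^K f_k(\mathbf X) g_k(\mathbf X) = \Psi(\mathbf X)$ for the coefficients $g_k(\mathbf X)$, and the whole difficulty is to do this \emph{consistently} so that each resulting map $\mathbf X \mapsto g_k(\mathbf X)$ is (a) continuous on all of $\mathbb R^{N\times d}$ and (b) symmetric under permutations of the electrons.

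\textbf{Symmetrization.} For symmetry, the natural device is to first produce \emph{any} continuous solution and then average over the symmetric group. Concretely, suppose I have continuous functions $h_1,\dots,h_K$ with $\sum_k f_k h_k = \Psi$. For a permutation $\pi$, note $f_k(\mathbf X_\pi) = \sigma(\pi) f_k(\mathbf X)$ and $\Psi(\mathbf X_\pi) = \sigma(\pi)\Psi(\mathbf X)$, so $\sum_k f_k(\mathbf X)\,[\sigma(\pi) h_k(\mathbf X_\pi)] = \Psi(\mathbf X)$ as well; hence $g_k(\mathbf X) := \frac{1}{N!}\sum_{\pi \in S_N} \sigma(\pi)\, h_k(\mathbf X_\pi)$ is continuous, symmetric, and still satisfies $\sum_k f_k g_k = \Psi$. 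So it suffices to find a continuous (not necessarily symmetric) solution $h$.

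\textbf{Continuity — the main obstacle.} The real work is constructing continuous $h_k$. Away from $\Omega$, at least one $f_k(\mathbf X)\neq 0$, so one could locally set $h_k = \Psi/f_k$ on the open set where $f_k\neq 0$ and $0$ elsewhere — but that is discontinuous across the boundary $\{f_k=0\}$, exactly the pathology that plagued the $d=1$ total-order construction. The fix I would pursue is a smooth partition-of-unity / normalized-weights approach: set
\begin{equation*}
h_k(\mathbf X) = \frac{f_k(\mathbf X)}{\sum_{\ell=1}^{K} f_\ell(\mathbf X)^2}\,\Psi(\mathbf X),
\end{equation*}
so that $\sum_k f_k(\mathbf X) h_k(\mathbf X) = \Psi(\mathbf X)$ wherever the denominator $S(\mathbf X):=\sum_\ell f_\ell(\mathbf X)^2$ is nonzero, i.e. off $\Omega$. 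By \Cref{support}, $S(\mathbf X) = 0$ exactly on $\Omega$, and on $\Omega$ we have $\Psi(\mathbf X)=0$; the crux is then to show $h_k$ extends continuously by $0$ across $\Omega$, i.e. that $|\Psi(\mathbf X)| / S(\mathbf X)$ stays bounded (in fact $\to 0$) as $\mathbf X \to \Omega$. This is a quantitative vanishing-rate comparison: near a point of $\Omega$ where $r$ coordinates collide, $\Psi$ vanishes at least linearly in the pairwise gaps (since $\Psi(\dots,X_i,\dots,X_j,\dots)$ is odd under swapping $X_i,X_j$, hence vanishes when $X_i=X_j$ — though for a merely continuous $\Psi$ "linear vanishing" is not automatic and must be argued more carefully, e.g. $|\Psi| \le \omega(\|\mathbf X - \mathbf X_0\|)\cdot(\text{something})$), while each $f_k$, being a product of the $y_k^\top(X_i-X_j)$, also vanishes to a controlled order. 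If the crude bound $|\Psi|/S \to 0$ fails for merely continuous $\Psi$, the fallback is to first prove the theorem for the dense subclass of ASC functions of the form $\Psi = \tilde f_{y}\cdot(\text{continuous symmetric})$ or polynomials times a base antisymmetric factor (using that such products are dense in ASC functions in the topology of uniform convergence on compacts — a Stone–Weierstrass-type statement), establish a uniform-approximation version, and then pass to the limit; continuity of the limiting $g_k$ would follow from local uniform convergence. I expect this continuity-across-$\Omega$ estimate to be the genuine technical heart of the proof; everything else (choosing the $f_k$ via \Cref{support}, the symmetrization averaging, and verifying $\phi = \Psi$ off $\Omega$) is routine.

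\textbf{Finishing.} Once continuous $h_k$ are in hand, apply the symmetrization above to get symmetric continuous $g_k$, observe $\phi(\mathbf X) = \sum_k f_k(\mathbf X) g_k(\mathbf X)$ agrees with $\Psi$ on $\mathbb R^{N\times d}\setminus\Omega$ by construction and on $\Omega$ because both sides are $0$ there, and conclude $\phi \equiv \Psi$ with $K = dN+1$, which also matches the count of generalized Slater determinants claimed elsewhere in the paper.
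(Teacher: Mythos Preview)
Your diagnosis of the main obstacle is exactly right, and your candidate $h_k = \tilde f_{y_k}\Psi / S$ with $S=\sum_\ell \tilde f_{y_\ell}^2$ is essentially the correct formula --- but the continuity across $\Omega$ genuinely fails as written, and your fallback density argument does not rescue it. For instance with $N=2$, $d=1$, take $\Psi(x_1,x_2)=\mathrm{sign}(x_1-x_2)\,|x_1-x_2|^{1/2}$, which is continuous and anti-symmetric; here every $\tilde f_{y_k}$ is a nonzero multiple of $(x_1-x_2)$, so $S\asymp (x_1-x_2)^2$ and $|h_k|\asymp |x_1-x_2|^{-1/2}\to\infty$. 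A Stone--Weierstrass-type approximation will not help, because the map $\Psi\mapsto h_k$ is unbounded on the space of ASC functions (your $h_k$ blows up even though $\Psi$ stays small), so uniform limits of nice $\Psi$'s do not yield uniform limits of the corresponding $h_k$'s.

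The paper closes this gap with a one-line trick you are circling but have not hit: it moves a factor of $\sqrt{S}$ into the \emph{pre-defined} side. Define the pre-defined functions as the normalized $f_k := \tilde f_{y_k}/\sqrt{S}$ off $\Omega$ (and $0$ on $\Omega$), so that $|f_k|\le 1$ everywhere and $\sum_k f_k^2 \equiv 1$ off $\Omega$. Now set $g_k := f_k\Psi$. Then $\sum_k f_k g_k = \bigl(\sum_k f_k^2\bigr)\Psi = \Psi$ off $\Omega$ (and both sides vanish on $\Omega$), and continuity of $g_k$ is immediate: $|g_k|\le |\Psi|$ and $\Psi$ is continuous with $\Psi|_\Omega=0$. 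No vanishing-rate comparison is needed at all. Note two further simplifications this buys: the theorem only asks the $f_k$ to be pre-defined anti-symmetric functions, not continuous, so the discontinuity of the normalized $f_k$ on $\Omega$ is harmless; and your symmetrization step is unnecessary, since $g_k=f_k\Psi$ is a product of two anti-symmetric functions and hence already symmetric.
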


\begin{proof}
According to \cref{support}, we choose $y_1,\ldots, y_K \in \mathbb R^d$ such that 
$   \bigcap_{k=1}^K \{\mathbf X:  \tilde f_{k}(\mathbf X) = 0\} = \Omega.
$
We construct the normalized basis as
\begin{equation}
\label{deff}
f_k(\mathbf X) = 
\begin{cases}
    \dfrac{\Tilde{f}_{k}(\mathbf X)}{\sqrt{\sum_{i=1}^{K} \Tilde{f}_{i}^2(\mathbf X)}}, & \text{if } \mathbf X \notin \Omega, \\
    0, & \text{if } \mathbf X \in \Omega.
\end{cases}
\end{equation}
Here $\tilde f_k$ stands for $\tilde f_{y_k}$ for simplicity. Notice that $f_k$ is well-defined, since for all $\mathbf X \notin \Omega, \sum_{i=1}^K \tilde f_i^2(\mathbf X) > 0$. In addition, we always have $|f_k(\mathbf x)|\leq 1$.

To represent $\Psi$, we set \(g_k(\mathbf X) = f_k(\mathbf X)\Psi(\mathbf X)\). To prove that \(g_k(\mathbf X)\) is continuous, we only need to consider $\mathbf X \in \Omega$ where $f_k$ is discontinuous. Since \(\Psi(\mathbf X)=0\) for $\mathbf X \in \Omega$, $\Psi$ is continuous, and \(|f_k(\mathbf X)| \leq 1\) is bounded, for $\mathbf X \in \Omega, \lim_{\mathbf X' \rightarrow \mathbf  X} g_k(\mathbf X') = 0 = g_k(\mathbf X)$. 
\end{proof}

\paragraph{Computation of the construction.} 
In our construction, $f_k$ is irrelevant to the solution $\Psi$, and \cref{support} implies that $y_k$ can be randomly selected and fixed beforehand. Given a vector $y_k$, the complexity of computing $f_k$ is $\mathcal O(N\log^2 N)$ \citep{doi:10.1080/00207160008804911}, so the total computation cost of $\{f_1,\ldots, f_K\}$ is $\mathcal O(N^2d \log^2 N)$. In addition, we only need to use NNs to simulate the continuous $g_k$. Computing $K$ symmetric features $g_k$ in FermiNet \citep{pfau2020ab} or PsiFormer \citep{von2022self} has a complexity of \(\mathcal{O}(N^2)\). Altogether, \cref{eq:ansatz} can be computed with a cost of $\mathcal O(N^2 d\log^2 N)$ in practice.  


\paragraph{Representation via Determinant Ansatz}
We now demonstrate our construction \cref{eq:ansatz} can be used for the generalized Slater determinants structure in \cref{eq:generalized_slater}.
Specifically, \(\mathcal O(N)\) determinants are sufficient to represent any ASC function.

\begin{theorem}\label{thm:open}
For any anti-symmetric continuous function \(\Psi(\mathbf X)\), there exist \(K=dN+1\) determinants $\Phi^1,\ldots,\Phi^K$ where the elementary function \(\phi_i^k(x_j|x_{-j})\) is continuous and symmetric to \(X_{\neq j}\), such that
$\Psi = \sum_{k=1}^{K} \Phi^k$.
\end{theorem}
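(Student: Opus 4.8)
The plan is to reduce Theorem~\ref{thm:open} to the already-established ansatz \cref{eq:ansatz} by showing that each term $f_k(\mathbf X) g_k(\mathbf X)$ can be rewritten as a single generalized Slater determinant of the form \cref{eq:generalized_slater}. Recall that the universality theorem gives us $\Psi = \sum_{k=1}^K f_k(\mathbf X) g_k(\mathbf X)$ with $f_k$ the normalized pre-defined AS basis from \cref{deff} and $g_k(\mathbf X) = f_k(\mathbf X)\Psi(\mathbf X)$ a continuous symmetric function. So it suffices to exhibit, for a fixed $k$, a determinant $\Phi^k$ with entries $\phi_i^k(X_j\mid X_{\neq j})$ continuous and symmetric in $X_{\neq j}$ such that $\Phi^k = f_k g_k$.

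\textbf{Key steps.} First I would recall the structure of $\tilde f_{y_k}(\mathbf X) = \prod_{i<j} y_k^\top(X_i - X_j)$, which is (up to the $y_k$ direction) a Vandermonde-type product. The classical identity is that for scalars $t_1,\dots,t_N$, $\prod_{i<j}(t_j - t_i) = \det(t_j^{\,i-1})_{i,j}$. Setting $t_j := y_k^\top X_j$, this expresses $\tilde f_{y_k}(\mathbf X)$ (up to sign) as $\det M^k$ where $M^k_{ij} = (y_k^\top X_j)^{i-1}$ — a genuine (non-generalized) Slater determinant with single-particle orbitals $\psi_i(X) = (y_k^\top X)^{i-1}$. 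Second, I would absorb the normalization and the symmetric factor $g_k$: since $g_k / \tilde f_{y_k} \cdot f_k = g_k / \sqrt{\sum_i \tilde f_i^2}$ when $\mathbf X \notin \Omega$, I multiply one chosen row (say row $1$) of $M^k$ by the scalar function $h_k(\mathbf X) := g_k(\mathbf X)/\tilde f_{y_k}(\mathbf X)$ off $\Omega$ and define the row entry to be $0$ on $\Omega$. Because $h_k$ is a symmetric function of $\mathbf X$ (both $g_k$ and $\tilde f_{y_k}$ are AS, so their ratio is symmetric wherever defined), multiplying a single row by it scales the determinant by $h_k$, giving $\Phi^k = h_k(\mathbf X)\det M^k = f_k(\mathbf X) g_k(\mathbf X)$ on $\mathbf X \notin \Omega$, and both sides vanish on $\Omega$. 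Third, I must check the structural requirement: each matrix entry $\phi_i^k(X_j \mid X_{\neq j})$ must be symmetric in $X_{\neq j}$. For rows $i \geq 2$ the entry $(y_k^\top X_j)^{i-1}$ depends only on $X_j$, hence is trivially symmetric in $X_{\neq j}$. For row $1$, the entry is $h_k(\mathbf X)$, which depends on \emph{all} coordinates but is fully symmetric, in particular symmetric in $X_{\neq j}$; I would note that $h_k$, as a ratio of two AS functions both vanishing exactly on $\Omega$, extends continuously by $0$ onto $\Omega$ by the same boundedness-and-squeeze argument used in the proof of the universality theorem (using $g_k = f_k \Psi$ and $|f_k|\le 1$, so $|h_k| = |g_k/\tilde f_{y_k}| = |f_k \Psi / \tilde f_{y_k}|$; here I'd want $|\tilde f_{y_k}|^{-1}\cdot\tilde f_{y_k}$-style cancellation — more carefully, write $h_k = \Psi \cdot \tilde f_{y_k} / \sum_i \tilde f_i^2$, and bound $|\tilde f_{y_k}|/\sum_i \tilde f_i^2 \le 1/|\tilde f_{y_k}|$ is the wrong direction, so instead I use $|\tilde f_{y_k}| \le (\sum_i \tilde f_i^2)^{1/2}$ to get $|h_k| \le |\Psi| / (\sum_i \tilde f_i^2)^{1/2}$, which need not be bounded — so the cleaner route is to put the symmetric factor as $g_k(\mathbf X)/\tilde f_{y_k}(\mathbf X)$ only after also showing this ratio is continuous; alternatively absorb $f_k$ and $g_k$ across two different rows). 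Summing over $k=1,\dots,K$ with $K = dN+1$ then yields $\Psi = \sum_k \Phi^k$.

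\textbf{Main obstacle.} The delicate point is the continuity and well-definedness of the scalar multiplier placed into row $1$: I need a symmetric, continuous function $h_k$ with $h_k \cdot \det M^k = f_k g_k$ everywhere, including across $\Omega$ where $\det M^k = \tilde f_{y_k}$ may vanish to high order while $f_k g_k = f_k^2 \Psi$ also vanishes. The safe resolution is to split the job across two rows: multiply row $1$ by the symmetric continuous function $\Psi(\mathbf X)$ (extend trivially, it is already continuous and vanishes on $\Omega$), and multiply row $2$ by the symmetric function $1/\sqrt{\sum_i \tilde f_i^2(\mathbf X)}$ off $\Omega$ — but this blows up on $\Omega$, so instead I would rescale the \emph{whole} matrix cleverly: replace orbitals $(y_k^\top X)^{i-1}$ by $(y_k^\top X)^{i-1} \big/ (\sum_i \tilde f_i^2)^{1/(2N)}$, so the determinant picks up exactly the factor $(\sum_i \tilde f_i^2)^{-1/2}$, turning $\det M^k$ into $f_k$ times a sign; then a single extra symmetric row-multiplier $\Psi$ finishes it, and each per-column orbital $(y_k^\top X_j)^{i-1}(\sum_i \tilde f_i^2(\mathbf X))^{-1/(2N)}$ is symmetric in $X_{\neq j}$ since $\sum_i \tilde f_i^2$ is fully symmetric. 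The continuity of these orbitals across $\Omega$ — where the denominator $\to 0$ — is exactly the place where I'd have to argue carefully that $(y_k^\top X_j)^{i-1}/(\sum_i \tilde f_i^2)^{1/(2N)}$ stays bounded (it does not in general!), so the genuinely correct construction is the one in \cref{deff}: keep $f_k$ itself bounded by $1$, and realize $f_k \cdot (\text{polynomial cofactor structure})$ as the determinant by factoring $f_k = \tilde f_{y_k}/(\sum \tilde f_i^2)^{1/2}$ and distributing the benign bounded factor $f_k / \tilde f_{y_k} = 1/(\sum\tilde f_i^2)^{1/2}$ — wait, that is unbounded on $\Omega$ too. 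I therefore expect the real argument to distribute the factor $(\sum_i\tilde f_i^2)^{-1/2}$ as an $N$-th root across all $N$ rows, \emph{and} to use that on approach to $\Omega$ the Vandermonde structure forces enough vanishing of the numerator $\det$-expansion to cancel it — this cancellation, i.e. verifying the extended entries are continuous on $\Omega$, is the crux and the step I'd budget the most care for; everything else (anti-symmetry from the determinant, the count $K = dN+1$ inherited from \cref{support}, symmetry of each entry in $X_{\neq j}$) is immediate.
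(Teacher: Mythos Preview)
Your decomposition $\Psi=\sum_k f_k^2\Psi$ and the plan to realize each $\Phi^k:=f_k^2\Psi$ as a single generalized Slater determinant is exactly the right starting point, and your Vandermonde observation $\tilde f_{y_k}(\mathbf X)=\pm\det\bigl((y_k^\top X_j)^{i-1}\bigr)_{i,j}$ is correct. But the route you then pursue --- Vandermonde matrix plus multiplication of one or several rows by a symmetric scalar --- cannot be completed, and the obstacle you keep bumping into is real, not a matter of bookkeeping.

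Here is why. Any construction of that shape produces a determinant of the form $S_k(\mathbf X)\cdot\tilde f_{y_k}(\mathbf X)$ with $S_k$ a continuous symmetric function (whatever you distribute across rows, the determinant is multilinear and you pull out a single symmetric scalar). Near a generic point of $\Omega$ where only $X_1=X_2$, every $\tilde f_{y_k}$ vanishes \emph{linearly} in $|X_1-X_2|$, so $S_k\tilde f_{y_k}=O(|X_1-X_2|)$ for any continuous $S_k$. But a general ASC $\Psi$ can vanish arbitrarily slowly on $\Omega$: for instance $\Psi=\operatorname{sign}(\tilde f_{y_1})\,|\tilde f_{y_1}|^{1/2}$ is ASC and decays like $|X_1-X_2|^{1/2}$. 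For such $\Psi$ no choice of continuous $S_k$ can give $\sum_k S_k\tilde f_{y_k}=\Psi$, and in particular $\Phi^k=f_k^2\Psi$ cannot be written as $S_k\tilde f_{y_k}$. This is why every variant you try (putting $h_k$ in one row, spreading $(\sum_i\tilde f_i^2)^{-1/(2N)}$ across all rows, etc.) forces some entry to blow up on $\Omega$: the polynomial Vandermonde orbitals are simply too rigid to match sublinear vanishing.

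The paper's construction avoids this entirely by \emph{not} using the Vandermonde basis. Instead it reuses Hutter's one-dimensional trick along the direction $y_k$: sort the particles by the scalar $y_k^\top X_j$, obtaining a permutation $\pi$, and set $\phi_i^k(X_j\mid X_{\neq j})$ to be (essentially) $|\Phi^k|^{1/N}$ if $j=\pi(i)$ and $0$ otherwise, with a sign in one row. The resulting matrix is a permutation matrix times $|\Phi^k|^{1/N}$, so its determinant is $\pm\Phi^k$. Continuity holds because the only discontinuity of the sorting permutation $\pi$ occurs when $y_k^\top X_{l_1}=y_k^\top X_{l_2}$ for some $l_1\neq l_2$, and at exactly those configurations $\tilde f_{y_k}=0$, hence $f_k=0$, hence $\Phi^k=f_k^2\Psi=0$ --- so all entries tend to $0$ there regardless of which branch of $\pi$ one takes. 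The key idea you are missing, then, is not a cleverer distribution of the normalizer but rather that $\Phi^k$ vanishes on the \emph{larger} set $\{y_k^\top X_i=y_k^\top X_j\}\supset\Omega$, which is precisely what makes the $d=1$ Hutter construction go through after projecting onto $y_k$.
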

The proof can be found in \cref{app:determinant}, and we briefly discuss the intuition here. According to \cref{support}, $\Psi$ can be decomposed into the sum of $K$ ASC functions $\Psi^k \triangleq f_k^2 \Psi$, and we simulate each $\Psi^k$ using one determinant. The construction of $f_k$ implies that $\Psi^k(\mathbf X) = 0$ when $y_k^\top X_i = y_k^\top X_j$. This allows us to ``project'' a $d$ dimensional vector into one-dimensional space where the order can be defined along this direction as if $d=1$, and the construction in \cite{hutter2020representing} can be reused.

Remarkably, \cref{thm:open} proves that when the elementary function $\phi_i^k$ is continuous and symmetric to $X_{\neq j}$, our construction can represent any ASC function, answering the open problem proposed by \cite{hutter2020representing} that $K = Nd+1$ determinants are sufficient. It remains unclear whether $K$ determinants are \textit{necessary} for the representation. In fact, since the construction in \cite{hutter2020representing} is quite sparse with most of the functions being $0$, we believe that reducing the number of $K$ should be possible.

Recently, \cite{chen2023exact} use a composite function ansatz $\phi(\mathbf X) = g(\pmb{\eta}(\mathbf X))$ to represent ASC function, where \(g\) is a continuous odd function, and \(\pmb{\eta}: \mathbb{R}^{dN} \rightarrow \mathbb{R}^m\) is a set of ASC functions. 
Their theorem only applies to $\Psi$ defined over a compact set, and the nature of the composite function makes it hard to accommodate the determinant structure.
Interestingly, \citep{chen2023exact} proves a lemma (Lemma 2.5) that the intersection of the zero set of $\mathcal O(N^2)$ AS functions can be $\Omega$, which is a strictly weaker version of \cref{support}. This also results in their high requirement of AS basis, with $m$ having the order of at least \(\mathcal{O}(N^5)\).
By applying our lemma, the cardinality of \(m\) could be directly reduced to \(\mathcal{O}(N^4)\) without any change of the ansatz. This again demonstrates the technical novelty and imporatance of \cref{support}, and also illustrates the efficiency of our construction over other ansatzes.

\section{Conclusion}
In conclusion, we propose a novel ansatz to represent ASC functions. Our ansatz is easily implemented, computationally efficient, and universally capable. 
Our work also addresses the open question of representing ASC functions using the generalized Slater determinants, without being impeded by its slow computation. We sincerely hope that the ansatz can be practically implemented in the future and be beneficial to the development of quantum mechanics.

\bibliography{iclr2024_conference}

\begin{thebibliography}{30}
\providecommand{\natexlab}[1]{#1}
\providecommand{\url}[1]{\texttt{#1}}
\expandafter\ifx\csname urlstyle\endcsname\relax
  \providecommand{\doi}[1]{doi: #1}\else
  \providecommand{\doi}{doi: \begingroup \urlstyle{rm}\Url}\fi

\bibitem[Anderson(1975)]{anderson1975random}
James~B Anderson.
\newblock A random-walk simulation of the schr{\"o}dinger equation: H+ 3.
\newblock \emph{The Journal of Chemical Physics}, 63\penalty0 (4):\penalty0 1499--1503, 1975.

\bibitem[Aznabaev et~al.(2018)Aznabaev, Bekbaev, and Korobov]{aznabaev2018nonrelativistic}
DT~Aznabaev, AK~Bekbaev, and Vladimir~I Korobov.
\newblock Nonrelativistic energy levels of helium atoms.
\newblock \emph{Physical Review A}, 98\penalty0 (1):\penalty0 012510, 2018.

\bibitem[Carleo \& Troyer(2017)Carleo and Troyer]{carleo2017solving}
Giuseppe Carleo and Matthias Troyer.
\newblock Solving the quantum many-body problem with artificial neural networks.
\newblock \emph{Science}, 355\penalty0 (6325):\penalty0 602--606, 2017.

\bibitem[Chen \& Lu(2023)Chen and Lu]{chen2023exact}
Ziang Chen and Jianfeng Lu.
\newblock Exact and efficient representation of totally anti-symmetric functions.
\newblock \emph{arXiv preprint arXiv:2311.05064}, 2023.

\bibitem[Choo et~al.(2020)Choo, Mezzacapo, and Carleo]{choo2020fermionic}
Kenny Choo, Antonio Mezzacapo, and Giuseppe Carleo.
\newblock Fermionic neural-network states for ab-initio electronic structure.
\newblock \emph{Nature communications}, 11\penalty0 (1):\penalty0 2368, 2020.

\bibitem[Feynman et~al.(1965)Feynman, Leighton, and Sands]{feynman1965feynman}
Richard~P Feynman, Robert~B Leighton, and Matthew Sands.
\newblock The feynman lectures on physics; vol. 3.
\newblock \emph{American Journal of Physics}, 33\penalty0 (9):\penalty0 750--752, 1965.

\bibitem[Foulkes et~al.(2001)Foulkes, Mitas, Needs, and Rajagopal]{foulkes2001quantum}
WMC Foulkes, Lubos Mitas, RJ~Needs, and Guna Rajagopal.
\newblock Quantum monte carlo simulations of solids.
\newblock \emph{Reviews of Modern Physics}, 73\penalty0 (1):\penalty0 33, 2001.

\bibitem[Gnech et~al.(2022)Gnech, Adams, Brawand, Carleo, Lovato, and Rocco]{gnech2022nuclei}
Alex Gnech, Corey Adams, Nicholas Brawand, Giuseppe Carleo, Alessandro Lovato, and Noemi Rocco.
\newblock Nuclei with up to a= 6 nucleons with artificial neural network wave functions.
\newblock \emph{Few-Body Systems}, 63\penalty0 (1):\penalty0 7, 2022.

\bibitem[Han et~al.(2018)Han, Jentzen, and E]{han2018solving}
Jiequn Han, Arnulf Jentzen, and Weinan E.
\newblock Solving high-dimensional partial differential equations using deep learning.
\newblock \emph{Proceedings of the National Academy of Sciences}, 115\penalty0 (34):\penalty0 8505--8510, 2018.

\bibitem[Heifetz(2020)]{heifetz2020quantum}
Alexander Heifetz.
\newblock \emph{Quantum mechanics in drug discovery}.
\newblock Springer, 2020.

\bibitem[Hermann et~al.(2020)Hermann, Sch{\"a}tzle, and No{\'e}]{hermann2020deep}
Jan Hermann, Zeno Sch{\"a}tzle, and Frank No{\'e}.
\newblock Deep-neural-network solution of the electronic schr{\"o}dinger equation.
\newblock \emph{Nature Chemistry}, 12\penalty0 (10):\penalty0 891--897, 2020.

\bibitem[Hibat-Allah et~al.(2020)Hibat-Allah, Ganahl, Hayward, Melko, and Carrasquilla]{hibat2020recurrent}
Mohamed Hibat-Allah, Martin Ganahl, Lauren~E Hayward, Roger~G Melko, and Juan Carrasquilla.
\newblock Recurrent neural network wave functions.
\newblock \emph{Physical Review Research}, 2\penalty0 (2):\penalty0 023358, 2020.

\bibitem[Huang et~al.(2023)Huang, Landsberg, and Lu]{huang2023geometry}
Hang Huang, Joseph~M Landsberg, and Jianfeng Lu.
\newblock Geometry of backflow transformation ansatze for quantum many-body fermionic wavefunctions.
\newblock \emph{Communications in Mathematical Sciences}, 21\penalty0 (5):\penalty0 1447--1453, 2023.

\bibitem[Hutter(2020)]{hutter2020representing}
Marcus Hutter.
\newblock On representing (anti) symmetric functions.
\newblock \emph{arXiv preprint arXiv:2007.15298}, 2020.

\bibitem[Iglovikov et~al.(2015)Iglovikov, Khatami, and Scalettar]{PhysRevB.92.045110}
V.~I. Iglovikov, E.~Khatami, and R.~T. Scalettar.
\newblock Geometry dependence of the sign problem in quantum monte carlo simulations.
\newblock \emph{Phys. Rev. B}, 92:\penalty0 045110, Jul 2015.
\newblock \doi{10.1103/PhysRevB.92.045110}.
\newblock URL \url{https://link.aps.org/doi/10.1103/PhysRevB.92.045110}.

\bibitem[Keimer \& Moore(2017)Keimer and Moore]{keimer2017physics}
B~Keimer and JE~Moore.
\newblock The physics of quantum materials.
\newblock \emph{Nature Physics}, 13\penalty0 (11):\penalty0 1045--1055, 2017.

\bibitem[Krizhevsky et~al.(2012)Krizhevsky, Sutskever, and Hinton]{krizhevsky2012imagenet}
Alex Krizhevsky, Ilya Sutskever, and Geoffrey~E Hinton.
\newblock Imagenet classification with deep convolutional neural networks.
\newblock \emph{Advances in neural information processing systems}, 25, 2012.

\bibitem[K{\"u}mmel(2003)]{kummel2003biography}
Hermann~G K{\"u}mmel.
\newblock A biography of the coupled cluster method.
\newblock \emph{International Journal of Modern Physics B}, 17\penalty0 (28):\penalty0 5311--5325, 2003.

\bibitem[Li \& Nakamura(2000)Li and Nakamura]{doi:10.1080/00207160008804911}
Lei Li and Tadao Nakamura.
\newblock Fast parallel algorithms for vandermonde determinants.
\newblock \emph{International Journal of Computer Mathematics}, 73\penalty0 (4):\penalty0 479--486, 2000.
\newblock \doi{10.1080/00207160008804911}.
\newblock URL \url{https://doi.org/10.1080/00207160008804911}.

\bibitem[Li et~al.(2023)Li, Ye, Jiang, Wen, Wang, Li, Li, He, Chen, Ren, et~al.]{li2023forward}
Ruichen Li, Haotian Ye, Du~Jiang, Xuelan Wen, Chuwei Wang, Zhe Li, Xiang Li, Di~He, Ji~Chen, Weiluo Ren, et~al.
\newblock Forward laplacian: A new computational framework for neural network-based variational monte carlo.
\newblock \emph{arXiv preprint arXiv:2307.08214}, 2023.

\bibitem[Needs et~al.(2020)Needs, Towler, Drummond, Lopez~Rios, and Trail]{needs2020variational}
RJ~Needs, MD~Towler, ND~Drummond, Pablo Lopez~Rios, and JR~Trail.
\newblock Variational and diffusion quantum monte carlo calculations with the casino code.
\newblock \emph{The Journal of chemical physics}, 152\penalty0 (15):\penalty0 154106, 2020.

\bibitem[Pan \& Meng(2022)Pan and Meng]{pan2022sign}
Gaopei Pan and Zi~Yang Meng.
\newblock Sign problem in quantum monte carlo simulation.
\newblock \emph{arXiv preprint arXiv:2204.08777}, 2022.

\bibitem[Pang et~al.(2022)Pang, Yan, and Lin]{pang2022n}
Tianyu Pang, Shuicheng Yan, and Min Lin.
\newblock $o(n^{2})$ universal antisymmetry in fermionic neural networks.
\newblock \emph{arXiv preprint arXiv:2205.13205}, 2022.

\bibitem[Pescia et~al.(2022)Pescia, Han, Lovato, Lu, and Carleo]{pescia2022neural}
Gabriel Pescia, Jiequn Han, Alessandro Lovato, Jianfeng Lu, and Giuseppe Carleo.
\newblock Neural-network quantum states for periodic systems in continuous space.
\newblock \emph{Physical Review Research}, 4\penalty0 (2):\penalty0 023138, 2022.

\bibitem[Pfau et~al.(2020)Pfau, Spencer, Matthews, and Foulkes]{pfau2020ab}
David Pfau, James~S Spencer, Alexander~GDG Matthews, and W~Matthew~C Foulkes.
\newblock Ab initio solution of the many-electron schr{\"o}dinger equation with deep neural networks.
\newblock \emph{Physical Review Research}, 2\penalty0 (3):\penalty0 033429, 2020.

\bibitem[Sard(1942)]{sard1942measure}
Arthur Sard.
\newblock The measure of the critical values of differentiable maps.
\newblock 1942.

\bibitem[Toponogov(2006)]{toponogov2006differential}
Victor~A Toponogov.
\newblock \emph{Differential geometry of curves and surfaces}.
\newblock Springer, 2006.

\bibitem[Vaswani et~al.(2017)Vaswani, Shazeer, Parmar, Uszkoreit, Jones, Gomez, Kaiser, and Polosukhin]{vaswani2017attention}
Ashish Vaswani, Noam Shazeer, Niki Parmar, Jakob Uszkoreit, Llion Jones, Aidan~N Gomez, {\L}ukasz Kaiser, and Illia Polosukhin.
\newblock Attention is all you need.
\newblock \emph{Advances in neural information processing systems}, 30, 2017.

\bibitem[von Glehn et~al.(2022)von Glehn, Spencer, and Pfau]{von2022self}
Ingrid von Glehn, James~S Spencer, and David Pfau.
\newblock A self-attention ansatz for ab-initio quantum chemistry.
\newblock \emph{arXiv preprint arXiv:2211.13672}, 2022.

\bibitem[Zweig \& Bruna(2022)Zweig and Bruna]{zweig2022towards}
Aaron Zweig and Joan Bruna.
\newblock Towards antisymmetric neural ansatz separation.
\newblock \emph{arXiv preprint arXiv:2208.03264}, 2022.

\end{thebibliography}
\bibliographystyle{iclr2024_conference}
\newpage
\appendix
\section{Proof of Lemma \ref{support} \label{appendix}}

In preparation for proving Lemma \ref{support}, we establish a foundational understanding by revisiting a fundamental concept in differential topology and introducing a pivotal lemma that addresses the measure of smooth functions when they map into higher-dimensional spaces. This is essential for the subsequent proof.

\textbf{Definition (Critical Point\citep{toponogov2006differential}): } Given a differentiable map \( f: \mathbb{R}^m \to \mathbb{R}^n \), the critical points of \( f \) are the points of \( \mathbb{R}^m \), where the rank of the Jacobian matrix $\mathbf J$ of \( f \) is less than $n$.

The rank of Jacobian matrix gives the dimension of the image of the differential of $f$. At a critical point, the rank of the Jacobian matrix is less than the maximum, which implies that the map $f$ is not injective at these points. 

\begin{lemma}
\label{sard}
Let $m < n$ be two positive integers. Consider a smooth function \(f: \mathbb{R}^m \to \mathbb{R}^{n}\) and a subset \(A \subseteq \mathbb{R}^m\). The image \(f(A)\) is of zero measure in \(\mathbb{R}^{n}\).
\end{lemma}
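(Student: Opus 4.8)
The plan is to prove a clean, self-contained statement: a smooth map $f : \mathbb{R}^m \to \mathbb{R}^n$ with $m < n$ has image of Lebesgue measure zero in $\mathbb{R}^n$. The cleanest route is to reduce to the classical fact that a Lipschitz (or merely $C^1$) map does not increase Hausdorff dimension, or equivalently to reduce directly to Sard's theorem. Concretely, I would factor $f$ through a higher-dimensional domain: define $F : \mathbb{R}^n \to \mathbb{R}^n$ by $F(x_1,\dots,x_n) = f(x_1,\dots,x_m)$, i.e. $F$ ignores the last $n-m$ coordinates and applies $f$ to the first $m$. Then $F$ is smooth, and $f(A) = F(A \times \mathbb{R}^{n-m})$, so it suffices to show $F(\mathbb{R}^n)$ has measure zero. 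But every point of $\mathbb{R}^n$ is a critical point of $F$: the Jacobian of $F$ has its last $n-m$ columns identically zero, hence rank at most $m < n$, so it is never of full rank $n$. By Sard's theorem, the set of critical values of $F$ — which here is all of $F(\mathbb{R}^n)$ — has Lebesgue measure zero in $\mathbb{R}^n$. Since $f(A) \subseteq F(\mathbb{R}^n)$, we conclude $f(A)$ has measure zero.

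Alternatively, to keep the argument elementary and avoid quoting Sard, I would argue directly via a covering/volume estimate. Cover $\mathbb{R}^m$ by countably many cubes, so it suffices to show $f(Q)$ has measure zero for a fixed bounded cube $Q$ of side $L$. On the compact set $Q$ the derivative $Df$ is bounded, say $\|Df\| \le M$, so $f$ is $M\sqrt{m}$-Lipschitz on $Q$ by the mean value inequality. Partition $Q$ into $r^m$ subcubes of side $L/r$; the image of each subcube lies in a ball of radius $\lesssim M\sqrt m\, L/r$ in $\mathbb{R}^n$, hence in a cube of $n$-dimensional volume $\lesssim (M\sqrt m\, L/r)^n$. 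Summing over the $r^m$ pieces gives a total outer measure $\lesssim r^m \cdot (CL/r)^n = C' r^{m-n}$, which tends to $0$ as $r \to \infty$ because $m < n$. Therefore $f(Q)$, and hence $f(\mathbb{R}^m)$ and any subset image $f(A)$, has Lebesgue measure zero. I would probably present this second argument in the paper since it is short, elementary, and makes the role of the strict inequality $m<n$ transparent, while remarking that it also follows from Sard's theorem.

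The only mild technical point — and the thing to be careful about rather than a genuine obstacle — is the reduction to a bounded domain: one must invoke countable subadditivity of Lebesgue outer measure to pass from cubes to all of $\mathbb{R}^m$, and one must note that $A$ need not be measurable, so the statement is really about outer measure (which is fine, since a set of outer measure zero is automatically measurable with measure zero). A second point worth stating explicitly is that "smooth" can be weakened to $C^1$, since only a local Lipschitz bound is used; this is harmless but worth a sentence so the later application to the analytic maps $\tilde f_{y}$ is unambiguous. No deep input beyond the mean value inequality and basic measure theory is needed.
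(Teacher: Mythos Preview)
Your proposal is correct. The paper's own proof is your first route, only shorter: it applies Sard's theorem directly to $f:\mathbb{R}^m\to\mathbb{R}^n$, observing that since $\operatorname{rank}(Df)\le m<n$ every point of the domain is critical, so the entire image is a set of critical values and hence has measure zero. Your factoring through $F:\mathbb{R}^n\to\mathbb{R}^n$ is not needed, since Sard's theorem is already stated for smooth maps between spaces of arbitrary dimensions; the detour is harmless but adds nothing.

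Your second argument (cube covering plus a Lipschitz/mean-value volume estimate, then $r^{m-n}\to 0$) is a genuinely different and more elementary route that the paper does not take. It has the advantage of being self-contained---no black-box invocation of Sard---and of making the role of the strict inequality $m<n$ explicit; it also shows that $C^1$ (indeed locally Lipschitz) suffices, which the paper does not remark on. The trade-off is length: the paper's one-line appeal to Sard is shorter, and since the downstream application is to polynomial maps where smoothness is never in doubt, the extra generality is not needed there. Your caveats about outer measure and countable subadditivity are accurate and appropriately minor.
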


\begin{proof}
Since $f$ is a mapping into higher-dimensional space, we always have $rank(\mathbf J)\leq m < n$. It implies that every point in $A$ is a critical point, and the image of $f$ consists solely of critical values.
Sard's Theorem \citep{sard1942measure} posits that the set of critical values of a smooth function has  Lebesgue measure zero in the target space. This makes the set of critical values ``small" in the sense of a generic property. 
\end{proof}

Now we are ready to prove \cref{support}. For the ease of reference, we restate the \cref{support}: 

\textbf{(Restate) Lemma \ref{support}: \label{support_appendix}}
For any \(K \geq dN + 1\), there exist vectors \(y_1, \ldots, y_K \in \mathbb{R}^d\) such that the intersection of the zero set of \(\Tilde{f}_{y_k}(X)\) equals $\Omega$, i.e. $\bigcap_{k=1}^K \{\mathbf X:  \tilde f_{y_k}(\mathbf X) = 0\} = \Omega.$ Furthermore, if $y_k$ is uniformly selected from $\{y \in \mathbb R^d:\|y\|=1\}$, then \cref{eq:zero} holds almost surely, i.e. the probability it holds is $1$.

For brevity, we define $Y=(y_1, y_2, ..., y_K) \in \mathbb{R}^{dK}$. We say that a configuration \(\mathbf{X}\) \textit{covers} \(Y\) if \(\forall k \in [K], \Tilde{f}_{y_k}(\mathbf{X}) = 0\). Therefore, to prove Lemma \ref{support_appendix}, it suffices to demonstrate that there exists \(Y \in \mathbb{R}^{dK}\) which cannot be covered by any \(\mathbf{X} \notin \Omega\), where \(\Omega = \{\rvX \in \mathbb{R}^{N \times d} : \exists i \neq j, X_i = X_j\}\). Formally, we define 
\[\Theta=\{Y:\exists \rvX\notin\Omega \text{, } Y \textit{is covered by } \rvX\}.\]

In our proof, we will establish that $\Theta$ is a null set in \(\mathbb{R}^{dK}\). Consequently, selecting a \(Y\) uniformly at random from \(\mathbb{R}^{dK}\) will almost surely meet the required condition, thus validating the lemma.

\begin{proof}
Consider selecting \(K\) pairs of numbers from \([N]\) such that for each pair \((i_k, j_k)\), \(i_k < j_k\). The total number of distinct selection strategies is \(T = \left(\frac{N(N-1)}{2}\right)^N\). Let \((i_k^t, j_k^t)\) denote the \(k\)-th pair in the \(t\)-th strategy.

For \(l \in [d]\), define the mapping \(\mathcal{F}_l: \mathbb{R}^d \times \mathbb{R}^{d-1} \rightarrow \mathbb{R}^d\) as
\begin{equation*}
\mathcal{F}_l(x, a)=\left(a_1x_l, \ldots, a_{l-1}x_l, -\sum_{l'=1}^{l-1}a_{l'}x_{l'} - \sum_{l'=l+1}^{d}a_{l'-1}x_{l'}, a_lx_l, \ldots, a_{d-1}x_l\right)^\top.
\end{equation*}

Intuitively, $\mathcal{F}_l(x, a)$ always outputs the vector orthogonal to $x$. Given \(t\in[T]\) and \((l_1, \ldots, l_K) \in [d]^K\) , define the mapping \(\mathcal{A}_t^{(l_1, \ldots, l_K)}: \mathbb{R}^{dN} \times \mathbb{R}^{(d-1)K} \rightarrow \mathbb{R}^{dK}\) such that each column of the output is defined as:
\begin{equation*}
\mathcal{A}_t^{(l_1, \ldots, l_K)}(\rvX, A)_k = \mathcal{F}_{l_k}(X_{i_k^t} - X_{j_k^t}, A_k).
\end{equation*}

We define the union of images as: 
\[\Gamma=\bigcup_{t,l_1,l_2,...l_K}{\text{Image}(\mathcal{A}_t^{(l_1, \ldots, l_K)})}.\]
In the following, we will prove that: a. $\Theta\subseteq \Gamma$. b. $\Gamma$ has zero measure in \(\mathbb{R}^{dK}\).

\textbf{a. $\Theta\subseteq \Gamma$}. We show that for any \(Y\) that can be covered by \(\rvX \notin \Omega\), there exist \(t \in [T]\), \((l_1, \ldots, l_K) \in [d]^K\), \(A \in \mathbb{R}^{(d-1)K}\) such that \(\mathcal{A}_t^{(l_1, \ldots, l_K)}(\rvX, A) = Y\). 

If \(Y\) is covered by \(\rvX\), then there exists $t\in T$, such that \( \forall k, y_k^\top (X_{i_k^t} - X_{j_k^t}) = 0\), i.e., $y_k$ is orthogonal to $x=X_{i_k^t} - X_{j_k^t}$. As \(\rvX\notin\Omega\), we have that for any \(k\), there exists a $l_k\in [d]$ such that \(x_{l_k} \neq 0\). As a result, we can choose \(A_k = \left(\frac{y_{k, 1}}{x_{l_k}}, \ldots, \frac{y_{k, l-1}}{x_{l_k}}, \frac{y_{k, l+1}}{x_{l_k}}, \ldots, \frac{y_{k, d}}{x_{l_k}}\right)^\top\). With detailed calculation, we can prove that: 
\begin{equation*}
\mathcal{F}_{l_k}(x, A_k) = y_k.
\end{equation*}

Selecting appropriate \(t\), \(l_1, \ldots, l_K\)  and \(A\) yields \(\mathcal{A}_t^{(l_1, \ldots, l_K)}(X, A) = Y\).

\textbf{b. $\Gamma$ has zero measure in \(\mathbb{R}^{dK}\)}. For every \(t \in [T]\) and \((l_1, \ldots, l_K) \in [d]^K\), \(\mathcal{A}_t^{(l_1, \ldots, l_K)}\) maps \(\mathbb{R}^{dN+(d-1)K}\) to \(\mathbb{R}^{dK}\). Given that \(K \geq dN + 1\), Lemma \ref{sard} implies that \(\mathcal{A}_t^{(l_1, \ldots, l_K)}\) has zero measure in \(\mathbb{R}^{dK}\). Hence,
\begin{equation*}
m(\Gamma) = m\left(\bigcup_{t=1}^{T} \bigcup_{l_1=1}^{d} \ldots \bigcup_{l_K=1}^{d} \mathcal{A}_t^{(l_1, \ldots, l_K)}(\mathbb{R}^{dN}, \mathbb{R}^{(d-1)K})\right) = 0.
\end{equation*}

Thus, \(\Theta\) also has zero measure, completing the proof.
\end{proof}

\section{Proof for Theorem \ref{thm:open} \label{app:determinant}}
\textbf{(Restate) Theorem \ref{thm:open}}: For any anti-symmetric continuous function \(\Psi(\mathbf X)\), there exist \(K=dN+1\) determinants $\Phi^1,\ldots,\Phi^K$ where the elementary function \(\phi_i^k(x_j|x_{-j})\) is continuous and symmetric to \(X_{\neq j}\), such that
\begin{equation*}
\Psi = \sum_{k=1}^{K} \Phi^k.
\end{equation*}

\begin{proof}
Consider any \(k \in [K]\). Let
\begin{equation*}
\Phi^k(X) = f_k^2(X)\Psi(X),
\end{equation*}
where \(f_k\) is as defined in \cref{deff}. It can be shown that \(\Psi = \sum_{k=1}^{K} \Phi^k\). Here, \(\Phi^k\) is a continuous anti-symmetric function. \(\Phi^k(X)=0\) when \(\exists i \neq j\) such that \(y_k^\top X_i = y_k^\top X_j\).

For any \(k\) and \(X_1, \ldots, X_N\), we can sort \((X_{1}, \ldots, X_{N})\) according to the scalar function \[s_k:\mathbb{R}^{d}\rightarrow \mathbb{R}, s_k(x)=y_k^{\top}x.\]  After sorting, we derive a permutation $\pi$, such that \(y_k^\top X_{\pi(1)} \leq \ldots \leq y_k^\top X_{\pi(N)}\). Then we can define



\begin{equation*}
\psi_i^k(X_j | X_{\neq j}) := 
\begin{cases} 
\text{sign}(\Phi^k)|\Phi^k(X_{\pi(1)}, \ldots, X_{\pi(N)})|^{1/n}, & \text{if } j = \pi(i), \\
0, & \text{else}.
\end{cases}
\end{equation*}

Hence, when \(y_k^\top X_{\pi(1)} , \ldots , y_k^\top X_{\pi(N)}\) are distinct, the permutation \(\pi\) is unique, and each \(\psi_i^k\) is evidently continuous. In the case where \(\exists l_1 \neq l_2\) such that \(y_k^{\top}X_{l_1} = y_k^{\top}X_{l_2}\), although the permutation \(\pi\) is not unique, \(\Phi^k(X_{\pi(1)}, \ldots, X_{\pi(N)})\) is \(0\) for all permutations. Thus, we have
\begin{equation*}
\lim_{y_k^{\top}X_{l_1} \rightarrow y_k^{\top}X_{l_2}} \psi_i^k(X_j|X_{\neq j}) =\psi_i^k(X_j|X_{\neq j})|_{y_k^{\top}X_{l_1} = y_k^{\top}X_{l_2}}= 0,
\end{equation*}
ensuring the continuity of \(\psi_i^k\). This completes the proof.
\end{proof}

\end{document}